\newtheorem{theorem}{Theorem}
\newtheorem{lemma}{Lemma}
\theoremstyle{definition}
\def\gg{\left|g\right|^2}
\def\fg{\left|f\right|^2}
\def\dg{\left|d\right|^2}
\def\C{{\mathcal C}}
\def\N{{\mathcal N}}
\def\E{{\mathcal E}}
\title{Benefits of improper signaling for underlay cognitive radio}
\author{\IEEEauthorblockN{Christian Lameiro,~\IEEEmembership{Student Member,~IEEE,} Ignacio Santamar\'ia,~\IEEEmembership{Senior~Member,~IEEE,} and Peter J. Schreier,~\IEEEmembership{Senior~Member,~IEEE}\\}
\thanks{C. Lameiro and I. Santamar\'ia are with the Department of Communications Engineering,
University of Cantabria, Spain (e-mail: \{lameiro, nacho\}@gtas.dicom.unican.es). P. J. Schreier is with the Signal \& System Theory Group, Universit\"{a}t Paderborn, Germany (email: peter.schreier@sst.upb.de).}
\thanks{C. Lameiro and I. Santamar\'ia have received funding from the Spanish Government (MICINN) under projects CONSOLIDER-INGENIO 2010 CSD2008-00010 (COMONSENS), TEC2013-47141-C4-3-R (RACHEL) and FPU Grant AP2010-2189; and also from the Deutscher Akademischer Austauschdienst (DAAD) under its programm "Research grants for doctoral candidates and young academics and scientists". P. Schreier receives financial support from the Alfried Krupp von Bohlen und Halbach foundation, under its program "Return of German scientists from abroad".}}
\begin{document}
\maketitle
\begin{abstract}
In this letter we study the potential benefits of improper signaling for a secondary user (SU) in underlay cognitive radio networks. We consider a basic yet illustrative scenario in which the primary user (PU) always transmit proper Gaussian signals and has a minimum rate constraint. After parameterizing the SU transmit signal in terms of its power and circularity coefficient (which measures the degree of impropriety), we prove that the SU improves its rate by transmitting improper signals only when the ratio of the squared modulus between the SU-PU interference link and the SU direct link exceeds a given threshold. As a by-product of this analysis, we obtain the optimal circularity coefficient that must be used by the SU depending on its power budget. Some simulation results show that the SU benefits from the transmission of improper signals especially when the PU is not highly loaded.
\end{abstract}
\begin{keywords}
Cognitive radio, interference channel, improper signaling, asymmetric complex signaling.
\end{keywords}
\section{Introduction}
\label{sec:intro}
Interference management techniques, the main limiting factor in current wireless networks, have received a lot of attention in recent years. One of the most interesting results is regarding the statistical nature of the transmitted signals in an interference-limited network. In such scenarios, the use of proper complex Gaussian signals has typically been assumed due to the fact that these are capacity achieving in the point-to-point, broadcast and multiple access channels \cite{Cover}. However, some recent results have proven that improper complex Gaussian signals increase the achievable rates in various interference-limited networks \cite{Ho2012,Zeng2013,Cadambe2010,ICC2013}, hence calling into question the widely-used assumption of proper Gaussian signals when interference presents a limiting factor. Improper signals have real and imaginary parts that have unequal power and/or are correlated. They have been successfully applied to enlarge the achievable rate region of the K-user multiple-input multiple-output interference channel (MIMO-IC) \cite{Zeng2013}, as well as to provide additional degrees of freedom in some specific scenarios \cite{Cadambe2010,ICC2013}.

At the same time, cognitive radio (CR) has emerged as a paradigm for efficient utilization of radio resources \cite{Mitola1999}. The fundamental idea of CR is to let the so-called cognitive users intelligently use side information for spectrum sharing. Typically, these users are called unlicensed or secondary users (SUs), which coexist with licensed or primary users (PUs) without disrupting the PUs' communication. Three main CR schemes have been proposed, which require different cognition levels: interweave, overlay, and underlay \cite{Goldsmith2009}. In this letter, we consider underlay CR, where the SUs are allowed to coexist with the PUs as long as they guarantee the PUs a given performance metric. Underlay CR has been studied for a wide range of scenarios \cite{Yang2013,Cumanan2012,LameiroWSA2014,Jorswieck2011}. Typically, optimization of the CR network (power control, beamforming, etc.) is carried out under interference temperature constraints to protect the PUs. On the other hand, other works have considered a rate constraint at the PU \cite{Cumanan2012}, or use spatial shaping constraints when the SU is equipped with multiple antennas \cite{LameiroWSA2014}.

In this letter we study whether improper signaling can be beneficial in underlay CR. More specifically, we consider a basic yet illustrative CR scenario, in which one single-antenna SU wishes to access the channel in the presence of a single-antenna primary point-to-point link that has a rate constraint. Since the PU is typically unaware of the SU and thus no cooperation exists between them, we assume that the PU always transmits proper Gaussian signals (i.e., its transmission strategy is independent of that of the SU), whereas the SU may transmit either proper or improper Gaussian signals depending on which signaling scheme performs better. We analyze this setting and prove that the SU improves its rate by transmitting improper signals only when the ratio of the squared modulus between the SU-PU interference link and the SU direct link exceeds a given threshold. To the best of our knowledge, this is the first work that tackles the problem of improper signaling in CR scenarios.

%whereas the primary receiver measures the interference plus noise power and impropriety level and proceeds with the decoding process accordingly

The remainder of this letter is organized as follows. Section \ref{sec:sec1} reviews some properties of improper random variables and introduces the system model. Section \ref{sec:rates} derives the achievable rate of the SU for the proper and improper cases. In Section \ref{sec:results} we provide some numerical examples to illustrate our findings. Section \ref{sec:conclusions} presents the concluding remarks.

\section{Improper signaling in underlay cognitive radio}\label{sec:sec1}
\subsection{Preliminaries of improper random variables}
Here we provide the main definitions and results on improper random variables that will be used throughout the paper. For a comprehensive analysis of improper signals, we refer the reader to \cite{Schreier2010}.

The \emph{complementary-variance} of a zero-mean complex random variable $x$ is defined as $\tau_x\doteq\E\left\{x^2\right\}$, where $\E\{\cdot\}$ is the expectation operator. If $\tau_x=0$, then $x$ is called \emph{proper}, otherwise \emph{improper}.

The circularity coefficient of a complex random variable $x$ is defined as the absolute value of the quotient of its complementary-variance and variance, i.e.,
\begin{equation}
    \kappa_x\doteq\frac{\left|\tau_x\right|}{\sigma_x^2} \; .
\end{equation}
The circularity coefficient satisfies $0\leq\kappa_x\leq1$ and measures the degree of impropriety of $x$. If $\kappa_x=1$ we call $x$ \emph{maximally improper}.

\subsection{System description}\label{sec:model}
We consider a CR scenario, modeled as a two-user single-input single-output interference channel (SISO-IC), where a SU wishes to access the channel in presence of a PU. As depicted in Fig. \ref{fig:IC}, the PU transmits with fixed power, $p\leq P$, and has a minimum rate constraint, $\bar{R}$; whereas the SU must adjust its transmit power, $q\leq Q$, in order to control the interference level at the primary receiver such that its rate requirement is guaranteed. In this setting, the signal received by the primary and secondary receivers can be expressed, respectively, as
\begin{align}
    y_p&=h\sqrt{p}s_p+g\sqrt{q}s_s+n_p \; , \\
    y_s&=f\sqrt{q}s_s+d\sqrt{p}s_p+n_s \; ,
\end{align}
where $h$, $d$, $g$ and $f$ are the PU-PU, PU-SU, SU-PU and SU-SU channels (see Fig. \ref{fig:IC}); $n_p\sim\mathcal{CN}(0,\sigma_p^2)$ and $n_s\sim\mathcal{CN}(0,\sigma_s^2)$ are the noise at the primary and secondary receivers, respectively, where $\mathcal{CN}(0,\sigma^2)$ denotes a proper complex Gaussian distribution with zero mean and variance $\sigma^2$; and $s_p$ and $s_s$ are the Gaussian transmitted symbols, with $\mathcal{E}\{\left|s_p\right|^2\}=\mathcal{E}\{\left|s_s\right|^2\}=1$. For the sake of simplicity, we assume henceforth $\sigma_p=\sigma_s=\sigma$. Our results can be easily extended to different noise variances. In general terms, the signal transmitted by the SU can be parameterized in terms of its power, $q$, and the circularity coefficient, $\kappa$, which measures the degree of impropriety. The rate achieved by the PU can be written in terms of the circularity coefficient as \cite[eq. (30)]{Zeng2013}
\begin{equation}\label{eq:RpuImprop}
    R_{PU}\left(q,\kappa\right)=\log_2\left(1+\frac{p\left|h\right|^2}{\sigma^2+q\left|g\right|^2}\right)+\frac{1}{2}\log_2\frac{1-\kappa_{y_p}^2}{1-\kappa_{in_p}^2} \; ,
\end{equation}
where $\kappa_{y_p}$ and $\kappa_{in_p}$ are the circularity coefficients of the received and interference-plus-noise signals at the PU, respectively, which are given by
\begin{align}
    \kappa_{y_p}&=\frac{\kappa}{1+\frac{p\left|h\right|^2+\sigma^2}{q\left|g\right|^2}} \; , \\
    \kappa_{in_p}&=\frac{\kappa}{1+\frac{\sigma^2}{q\left|g\right|^2}} \; .
\end{align}
Taking $\kappa=0$ in \eqref{eq:RpuImprop} yields the well-known expression for the proper signaling case. Furthermore, using \cite[eq. (30)]{Zeng2013} the achievable rate of the SU as a function of $q$ and $\kappa$ can be expressed as
\begin{equation}\label{eq:Rsu}
    R_{SU}\left(q,\kappa\right)=\frac{1}{2}\log_2\left\{\frac{q\left|f\right|^2}{\tilde{\sigma}^2}\left[\left(1-\kappa^2\right)\frac{q\left|f\right|^2}{\tilde{\sigma}^2}+2\right]+1\right\} \; ,
\end{equation}
where $\tilde{\sigma}^2=\sigma^2+p\dg$ is the interference-plus-noise power at the secondary receiver.
\begin{figure}[t]
\centering
\includegraphics[width=0.85\columnwidth]{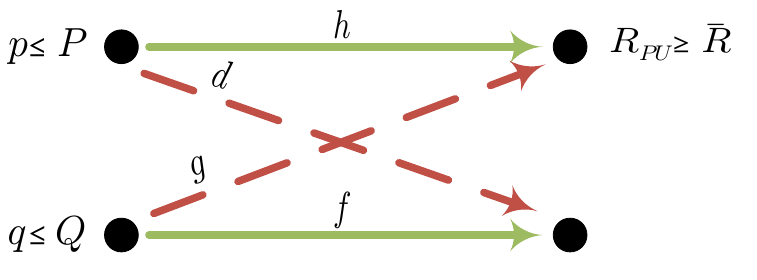}
\caption{A simple underlay CR scenario modeled as a two-user SISO-IC. The SU (bottom link) may transmit improper signals, but must guarantee the rate constraint of the PU (top link).}
\label{fig:IC}
\end{figure}

From these expressions it becomes evident that, if $q$ is kept fixed, increasing the degree of impropriety of the transmitted signal will clearly decrease the achievable rate of the SU but, at the same time, it will also increase the rate of the PU, thus allowing the SU to increase its transmit power while keeping the rate of the PU above its requirement. It is therefore clear that there exists a tradeoff between the additional transmit power that must be used for the SU to maintain its proper signaling rate, and the extra power that it is allowed to transmit. Improper signaling will be beneficial only when the latter is greater than the former. In the next section we derive the achievable rate of the SU as a function of $\kappa$, which will shed light onto this tradeoff.

\section{Achievable rates}\label{sec:rates}
In this section we derive the achievable rate of the SU for both cases, i.e., proper and improper transmissions, when the PU data rate is constrained as $R_{PU}(q,\kappa)\geq\bar{R}$, by expressing the allowable transmit power, $q$, as a function of $\kappa$. We express the rate constraint as a fraction of the maximum achievable rate (in the absence of interference), i.e., $\bar{R}=\alpha R_{PU}(0,0)=\alpha\log_2\left(1+\frac{p\left|h\right|^2}{\sigma^2}\right)$, where $\alpha\in\left[0,1\right]$ is the loading factor. Notice that, since setting $\kappa=0$ yields a proper Gaussian signal, the results obtained for the improper case can be specialized to the proper one. However, for the sake of exposition, we provide separate expressions for both cases.
\subsection{Proper signaling case}
When $\kappa=0$, the achievable rate of the SU can be expressed as
\begin{equation}\label{eq:RsuProp}
    R_{SU}\left(\kappa=0\right)=\log_2\left(1+\frac{q\left(\kappa=0\right)\left|f\right|^2}{\tilde{\sigma}^2}\right) \; ,
\end{equation}
$q(\kappa=0)$ is the allowable power for the proper case, which is obtained by equating \eqref{eq:RpuImprop} to $\alpha R_{PU}(0,0)$ with $\kappa=0$, and is given by
\begin{equation}
    q(\kappa=0)=\frac{\sigma^2}{\left|g\right|^2}\left(\frac{\gamma\left(1\right)}{\gamma\left(\alpha\right)}-1\right) \; ,
\end{equation}
where $\gamma(a)=2^{aR_{PU}(0,0)}-1$, i.e., the required signal-to-noise ratio (SNR) needed to achieve a rate $aR_{PU}(0,0)$ in the absence of interference. Note that we have dropped the dependence with $q$ in \eqref{eq:RsuProp}, since the rate of the SU is now a function of $\kappa$ only.

\subsection{Improper signaling case}
Analyzing the achievable rate of the SU as a function of $\kappa$ will provide us with insights on the properties of improper signaling for this scenario. To this end, we provide the following lemma.
\begin{lemma}\label{prop:Rsu}
    When the rate of the PU is constrained as $R_{PU}\geq\alpha R_{PU}(0,0)$, the achievable rate of the SU can be expressed in terms of its circularity coefficient as
    \begin{align}\label{eq:RsuProp1}
        R_{SU}\left(\kappa\right)=\frac{1}{2}\log_2\left[\frac{2q\left(\kappa\right)\left|f\right|^2}{\tilde{\sigma}^2}\left(1-\beta\frac{\left|f\right|^2\sigma^2}{\left|g\right|^2\tilde{\sigma}^2}\right)+\frac{\left|f\right|^4\sigma^4}{\left|g\right|^4\tilde{\sigma}^4}\left(\frac{\gamma(2)}{\gamma(2\alpha)}-1\right)+1\right] \; ,
    \end{align}
    where $q\left(\kappa\right)$ is the allowed transmit power, which is given by
    \begin{align}\label{eq:qsuProp1}
        q\left(\kappa\right)=\left[\sqrt{\beta^2+\left(1-\kappa^2\right)\left(\frac{\gamma(2)}{\gamma(2\alpha)}-1\right)}-\beta\right]\frac{\sigma^2}{\left|g\right|^2\left(1-\kappa^2\right)} \; ,
    \end{align}
    and $\beta=1-\frac{\gamma\left(1\right)}{\gamma\left(2\alpha\right)}$ is a parameter that satisfies $\beta \leq 1$.
\end{lemma}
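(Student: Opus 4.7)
The plan is to determine $q(\kappa)$ by setting the PU rate constraint to equality (since $R_{SU}$ is monotone in $q$, the SU will always use the maximum allowable power), and then substitute the resulting $q(\kappa)$ back into \eqref{eq:Rsu}. To set up the equality, I rewrite $R_{PU}(q,\kappa)=\alpha R_{PU}(0,0)$ and multiply both sides by $2$ before exponentiating, which turns the PU constraint into the algebraic identity
\begin{equation*}
\left(1+\frac{p\hg}{\sigma^2+q\gg}\right)^{\!2}\frac{1-\kappa_{y_p}^2}{1-\kappa_{in_p}^2}=\bigl(1+\gamma(\alpha)\bigr)^2=1+\gamma(2\alpha),
\end{equation*}
where the last equality uses $\gamma(a)=2^{aR_{PU}(0,0)}-1$.

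To handle the algebra cleanly, I introduce the dimensionless quantity $u=q\gg/\sigma^2$, so that $\kappa_{y_p}=\kappa u/(u+1+\gamma(1))$, $\kappa_{in_p}=\kappa u/(u+1)$, and the SNR term becomes $(u+1+\gamma(1))/(u+1)$. Writing $1-\kappa_{y_p}^2$ and $1-\kappa_{in_p}^2$ over common denominators, the factors $(u+1+\gamma(1))^2$ and $(u+1)^2$ cancel between the SNR term and the circularity ratio, leaving
\begin{equation*}
(u+1+\gamma(1))^2-\kappa^2 u^2=\bigl(1+\gamma(2\alpha)\bigr)\bigl[(u+1)^2-\kappa^2 u^2\bigr].
\end{equation*}
Expanding and using $(1+\gamma(1))^2=1+\gamma(2)$, the constant terms collapse and the equation reduces to the quadratic
\begin{equation*}
(1-\kappa^2)u^2+2\beta u-\left(\frac{\gamma(2)}{\gamma(2\alpha)}-1\right)=0,
\end{equation*}
where $\beta=(\gamma(2\alpha)-\gamma(1))/\gamma(2\alpha)=1-\gamma(1)/\gamma(2\alpha)$ emerges naturally as the coefficient of $u$. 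Taking the positive root (the only physically admissible one, since $u\geq 0$) and reverting to $q$ yields \eqref{eq:qsuProp1}. The bound $\beta\leq 1$ is immediate from $\gamma(1)\geq 0$.

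Finally, to obtain the rate expression \eqref{eq:RsuProp1}, I plug $q(\kappa)$ into $R_{SU}(q,\kappa)=\frac{1}{2}\log_2[(1-\kappa^2)A^2+2A+1]$ with $A=q\fg/\tilde\sigma^2$. The troublesome term is $(1-\kappa^2)A^2=(\fg\sigma^2/\gg\tilde\sigma^2)^2(1-\kappa^2)u^2$, which is quadratic in $q(\kappa)$; however, the quadratic derived above gives exactly $(1-\kappa^2)u^2=\gamma(2)/\gamma(2\alpha)-1-2\beta u$, and substituting this reduces $(1-\kappa^2)A^2$ to a term linear in $q(\kappa)$ plus a $\kappa$-independent constant. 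Collecting the two $q(\kappa)$-terms (the $2A$ and the new linear piece) produces the coefficient $1-\beta\fg\sigma^2/(\gg\tilde\sigma^2)$ and leaves the residual constant $(\fg\sigma^2/\gg\tilde\sigma^2)^2(\gamma(2)/\gamma(2\alpha)-1)+1$, which is precisely \eqref{eq:RsuProp1}. The main obstacle is the bookkeeping in the cancellation step that reduces the exponentiated PU constraint to a quadratic; once $\beta$ is identified as the right coefficient, the rest is an exercise in using the quadratic relation to linearize the rate formula.
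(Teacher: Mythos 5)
Your proposal is correct and follows essentially the same route as the paper's proof: equating $R_{PU}(q,\kappa)$ to $\alpha R_{PU}(0,0)$, reducing the exponentiated constraint to the quadratic $(1-\kappa^2)u^2+2\beta u-\bigl(\gamma(2)/\gamma(2\alpha)-1\bigr)=0$, taking the admissible root to get $q(\kappa)$, and then using that same quadratic relation to linearize the $(1-\kappa^2)q^2$ term in $2^{2R_{SU}}$. Your write-up simply makes explicit the "manipulations" the paper leaves implicit (the normalization $u=q\left|g\right|^2/\sigma^2$ and the use of $(1+\gamma(1))^2=1+\gamma(2)$), which is a presentational rather than a substantive difference.
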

\begin{proof}
    Please refer to Appendix \ref{app:A}.
\end{proof}
Lemma \ref{prop:Rsu} provides an alternative expression for the SU rate that depends on $\kappa$ only through its impact on the allowed transmit power, $q\left(\kappa\right)$. With this observation, we obtain the following result.
\begin{theorem}\label{le:ImpropCond}
    When the rate of the PU is constrained as $R_{PU}\geq\alpha R_{PU}(0,0)$ and $q(0)<Q$, the achievable rate of the SU is improved by transmitting improper signals if and only if
    \begin{equation}\label{eq:ImpropCond}
        \frac{\left|g\right|^2\tilde{\sigma}^2}{\left|f\right|^2\sigma^2}>\beta \; ,
    \end{equation}
    with $\beta=1-\frac{\gamma\left(1\right)}{\gamma\left(2\alpha\right)}$. Furthermore, when \eqref{eq:ImpropCond} holds, the optimal value of $\kappa$ is given by
    \begin{equation}\label{eq:kappaOpt}
        \kappa^\star=\left\{\begin{array}{cc}
        1 & \text{if} \; q\left(1\right)\leq Q \\
        \sqrt{1-\frac{\sigma^2}{Q\gg}\left[\left(\frac{\gamma(2)}{\gamma(2\alpha)}-1\right)\frac{\sigma^2}{Q\gg}-2\beta\right]} & \text{otherwise}\end{array}\right. \; .
    \end{equation}
\end{theorem}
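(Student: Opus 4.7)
The plan is to exploit the structure of the SU rate given by Lemma \ref{prop:Rsu}: $\kappa$ enters \eqref{eq:RsuProp1} solely through $q(\kappa)$, which sits linearly inside the logarithm weighted by the fixed factor $\frac{2\fg}{\tilde{\sigma}^2}\bigl(1-\beta\frac{\fg\sigma^2}{\gg\tilde{\sigma}^2}\bigr)$ (the remaining terms do not depend on $\kappa$). Since $\log_2$ is strictly monotone, the sign of $\partial R_{SU}/\partial\kappa$ is the sign of this weight times $q'(\kappa)$. Condition \eqref{eq:ImpropCond} is exactly the statement that the bracketed weight is positive, so the iff part of the theorem reduces to showing that $q(\kappa)$ is strictly increasing on $[0,1]$.

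The monotonicity of $q(\kappa)$ is the delicate step, because \eqref{eq:qsuProp1} is a difference of a square root and $\beta$ divided by $(1-\kappa^2)$ and is not visibly monotone (the sign of $\beta$ can moreover be negative). I would rationalize by multiplying numerator and denominator by the conjugate surd $\sqrt{\beta^2+(1-\kappa^2)c}+\beta$, where $c=\gamma(2)/\gamma(2\alpha)-1\geq 0$, recasting the allowable power as
\begin{equation}
q(\kappa)=\frac{c\,\sigma^2}{\gg\,\bigl(\sqrt{\beta^2+(1-\kappa^2)c}+\beta\bigr)} \, .
\end{equation}
Now the denominator is manifestly positive on $[0,1)$ (since $\sqrt{\beta^2+(1-\kappa^2)c}\geq|\beta|\geq-\beta$) and strictly decreasing in $\kappa$, so $q(\kappa)$ is strictly increasing. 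Combined with the first paragraph, this immediately gives the iff characterization \eqref{eq:ImpropCond}.

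For the optimal $\kappa^\star$, once \eqref{eq:ImpropCond} is in force $R_{SU}$ is strictly increasing in $\kappa$ and the only operative constraint is the budget $q(\kappa)\leq Q$. If $q(1)\leq Q$, no nontrivial bound applies and $\kappa^\star=1$. Otherwise, the hypothesis $q(0)<Q$ together with continuity and strict monotonicity of $q$ yields a unique $\kappa^\star\in(0,1)$ solving $q(\kappa^\star)=Q$. Setting $u=1-\kappa^2$ and $A=Q\gg/\sigma^2$, this equation becomes $\sqrt{\beta^2+cu}=\beta+Au$; both sides are nonnegative at the solution, so squaring collapses the radical and leaves a linear equation in $u$ whose solution, reverted via $\kappa^\star=\sqrt{1-u}$, reproduces the second branch of \eqref{eq:kappaOpt}. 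The rationalization trick is what makes the whole argument essentially a one-liner; without it the sign juggling induced by $\beta$ would make the derivative computation awkward.
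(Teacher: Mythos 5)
Your proposal is correct and follows essentially the same route as the paper's proof: exploit that $\kappa$ enters $R_{SU}$ only through $q(\kappa)$, which appears linearly inside the logarithm of \eqref{eq:RsuProp1}, so the sign of the slope gives condition \eqref{eq:ImpropCond}, and then set $\kappa^\star=1$ or solve $q(\kappa^\star)=Q$ depending on the power budget. The only (welcome) difference is that you prove the monotonicity of $q(\kappa)$ explicitly via the rationalized form $q(\kappa)=\frac{c\,\sigma^2}{\gg\left(\sqrt{\beta^2+(1-\kappa^2)c}+\beta\right)}$, a step the paper merely asserts with an intuitive argument and an omitted derivative computation, and you likewise spell out the algebra behind the second branch of \eqref{eq:kappaOpt}.
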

\begin{proof}
    Please refer to Appendix \ref{app:B}.
\end{proof}
This theorem provides a necessary and sufficient condition for improper signaling to be beneficial in the considered scenario, which, thanks to its simplicity, provides interesting insights. We observe that, since $\beta \leq 1$ and $\tilde{\sigma}^2=\sigma^2+p\dg \geq \sigma^2$, if the gain of the interfering channel, $g$, is greater than that of the SU direct channel, $f$, then the use of improper signaling will always enhance the SU data rate independently of the rate constraint and SNR of the PU. Alternatively, when the SNR of the PU, $\gamma(1)$, is equal to or greater than $\gamma(2\alpha)$ (i.e., $\alpha\leq0.5$ and hence the PU can achieve its rate constraint by using only the real or imaginary part of the transmitted symbol), improper signaling also improves the SU rate independently of the interfering and SU direct channels. Moreover, it is shown that, if improper signaling is beneficial, then maximally improper signals are optimal. Note, however, that the optimal transmitted signals may not always be chosen as maximally improper due to the limited power budget at the SU. In those cases, the circularity coefficient must be the minimum value that allows the SU to transmit with its maximum power. Condition \eqref{eq:ImpropCond} shows that improper signaling is beneficial if the increase in allowable power due to improper transmissions is greater than the additional power needed by the SU to maintain the rate achieved by proper signaling.

\section{Numerical examples}\label{sec:results}
\begin{figure}[t]
\centering
\includegraphics[width=0.85\columnwidth]{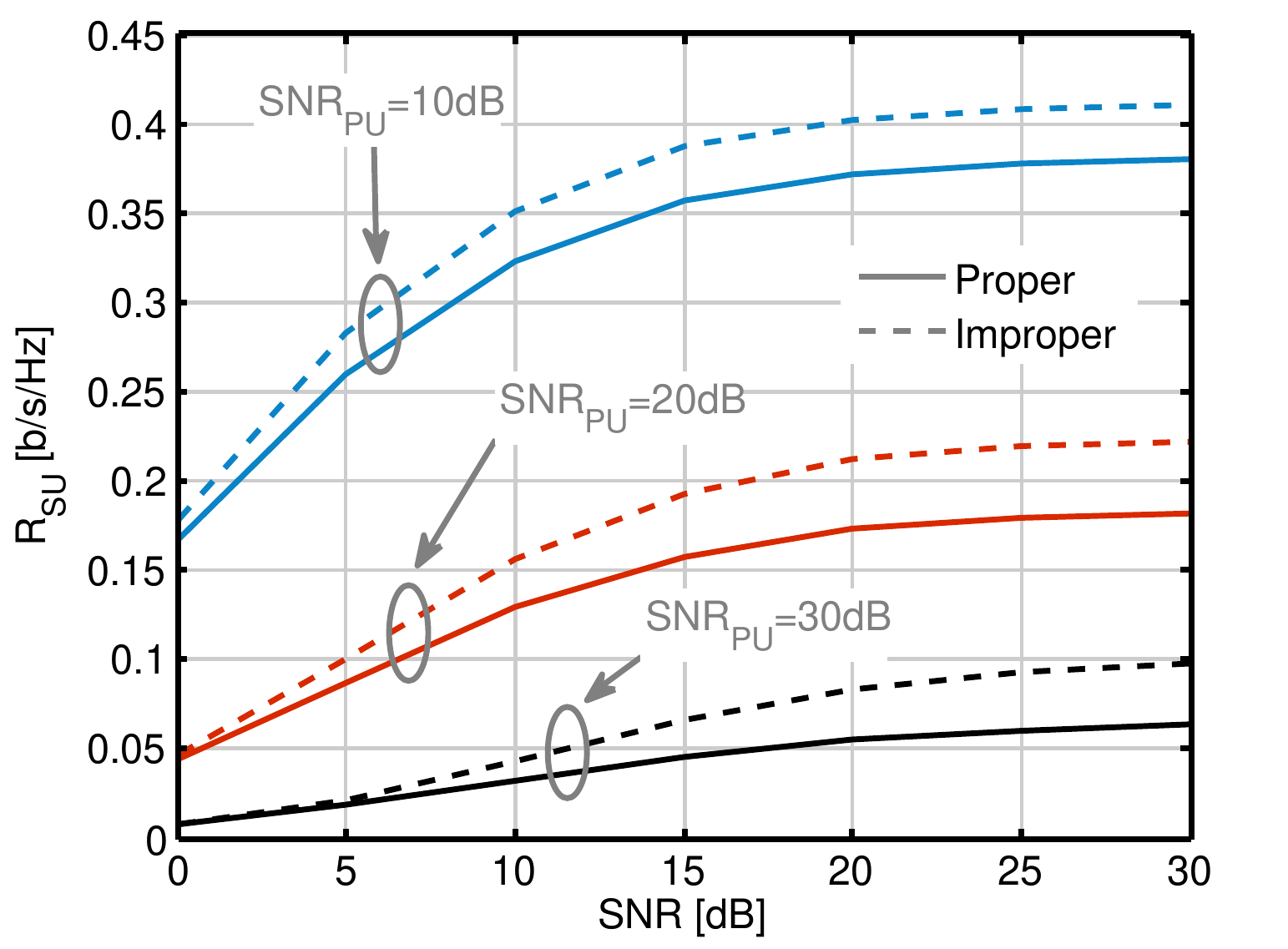}
\caption{Achievable rate of the SU as a function of its SNR, for different SNRs of the PU and $\alpha=0.8$.}
\label{fig:ICalpha08}
\end{figure}
In this section we provide some numerical examples that illustrate our findings. Assuming Rayleigh fading, all channel coefficients are distributed as $\C\N\left(0,1\right)$. The SNR of the PU and SU is respectively defined as $\text{SNR}_{PU}=\frac{P}{\sigma^2}$ and $\text{SNR}_{SU}=\frac{Q}{\sigma^2}$. Since the PU has no additional constraints, we set $p=P$, i.e., it transmits with its maximum available power. Without loss of generality, we consider $\sigma^2=1$. All results are averaged over $10^5$ independent channel realizations.

Fig. \ref{fig:ICalpha08} shows the achievable rate of the SU as a function of its SNR for $\alpha=0.8$ and different values of $\text{SNR}_{PU}$. In this scenario, we observe a noticeable improvement by using improper signaling, with relative gains in the considered SNR regime of up to 9, 23 and 56 \% for $\text{SNR}_{PU}=10$, 20 and 30 dB, respectively. Fig. \ref{fig:ICalpha05} shows the achievable rate of the SU for $\alpha=0.5$. In this case, the improvement is significantly higher than in the previous scenario, achieving a relative rate improvement up to 292, 297 and 256 \% for $\text{SNR}_{PU}=10$, 20 and 30 dB, respectively. This is because the rate requirement for the PU is less stringent, hence providing the SU with more possibilities for improving its rate. Furthermore, a loading factor of $\alpha=0.5$ permits the PU to achieve its requirement only with the real or imaginary part of the transmitted symbol, which, according to Theorem \ref{le:ImpropCond}, makes improper signaling beneficial in all channel realizations and allows the SU to transmit with its maximum power.

\begin{figure}[t]
\centering
\includegraphics[width=0.85\columnwidth]{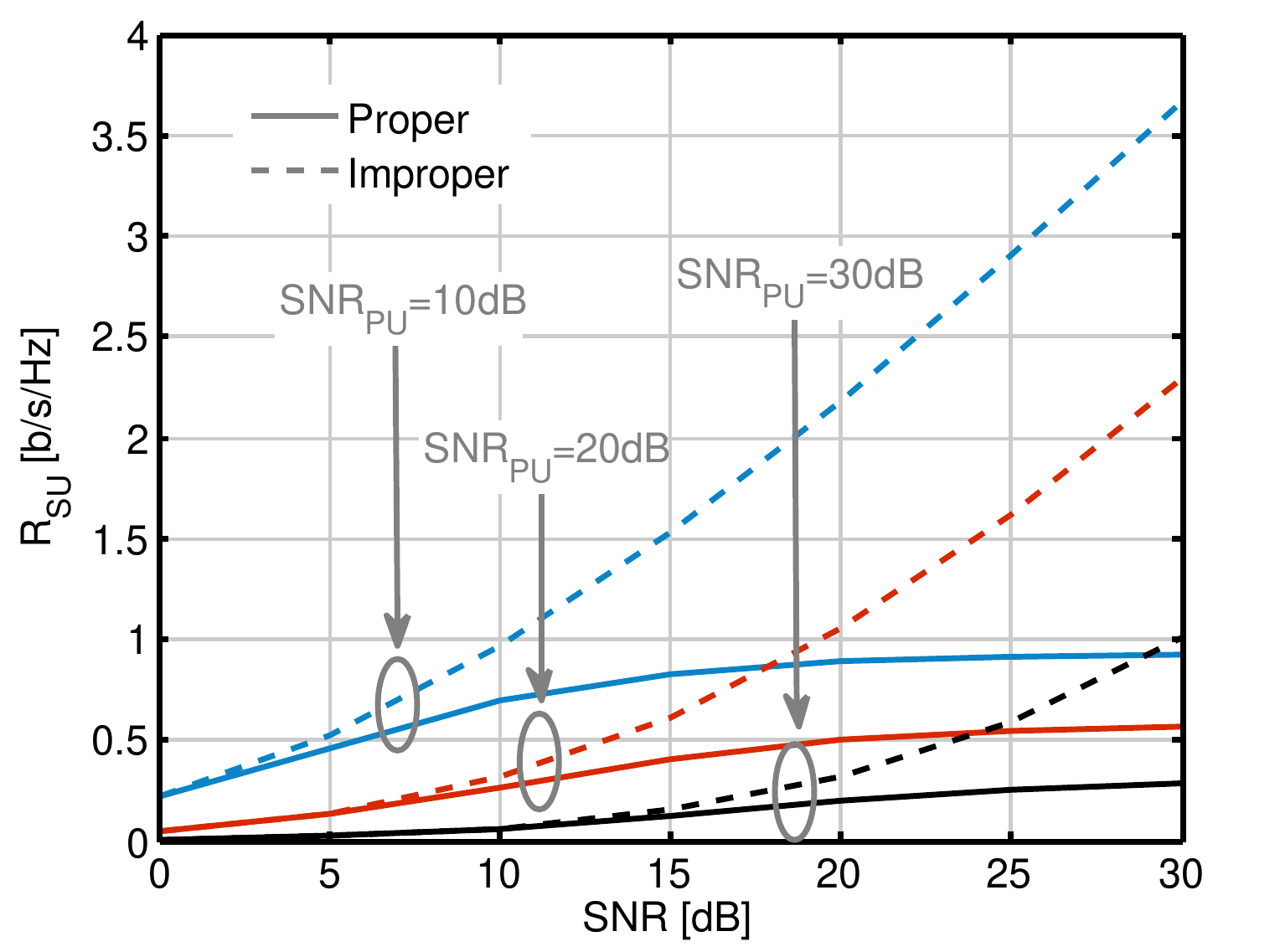}
\caption{Achievable rate of the SU as a function of its SNR, for different SNRs of the PU and $\alpha=0.5$.}
\label{fig:ICalpha05}
\end{figure}

\section{Conclusion}\label{sec:conclusions}
In this letter we have analyzed the benefits of improper Gaussian signaling for underlay CR. To this end, we have considered a basic yet illustrative scenario, the two-user SISO-IC, and derived the achievable rate of the SU as a function of its circularity coefficient when the PU has a rate constraint. This expression allows us to obtain a simple and insightful condition to determine when improper signaling is beneficial, as well as the optimal circularity coefficient in those cases. Our numerical results indicate that the SU may benefit from the transmission of improper signals especially when the PU is not highly loaded. Further lines of research include the design of practical constraints for the SU taking account of impropriety (e.g., interference temperature), as well as the extension of these results to other CR scenarios.

\appendices

\section{Proof of Lemma \ref{prop:Rsu}}\label{app:A}
Equating \eqref{eq:RpuImprop} to $\alpha R_{PU}(0,0)$ we have
\begin{equation}\label{eq:appA1}
    \frac{\left(\frac{q\left|g\right|^2+p\left|h\right|^2+\sigma^2}{q\left|g\right|^2}\right)^2-\kappa^2}{\left(\frac{q\left|g\right|^2+\sigma^2}{q\left|g\right|^2}\right)^2-\kappa^2}=2^{2\alpha R_{PU}(0,0)} \; .
\end{equation}
After some manipulations, the foregoing expression yields the second-order equation
\begin{equation}\label{eq:appA2}
    q^2=\frac{\sigma^2\left[\left(\frac{\gamma(2)}{\gamma(2\alpha)}-1\right)\sigma^2-2\gg\beta q\right]}{\left|g\right|^4\left(1-\kappa^2\right)} \; ,
\end{equation}
whose solution is given by \eqref{eq:qsuProp1}. On the other hand, using \eqref{eq:Rsu} we have
\begin{equation}\label{eq:appA3}
    2^{2R_{SU}}=\frac{\left|f\right|^4}{\tilde{\sigma}^4}\left(1-\kappa^2\right)q^2+\frac{2\fg}{\tilde{\sigma}^2}q+1 \; .
\end{equation}
Finally, \eqref{eq:RsuProp1} is obtained by substituting \eqref{eq:appA2} in \eqref{eq:appA3}, which concludes the proof.

\section{Proof of Theorem \ref{le:ImpropCond}}\label{app:B}
First, it is clear that $q(\kappa)$ is an increasing function. This can be readily observed by noticing that an improper interference increases the rate of the PU and, consequently, tolerates a higher amount of interference. This property can also be noticed by analyzing the derivative of $q(\kappa)$ with respect to $\kappa$, which yields the same conclusion and we omit due to lack of space. On the other hand, $R_{SU}(\kappa)$ depends on $\kappa$ only through its impact on $q(\kappa)$. Since the term within the logarithm in \eqref{eq:RsuProp1} is linear in terms of $q(\kappa)$, it will increase with $q(\kappa)$ (and, consequently, with $\kappa$) as long as it has a positive slope, i.e.,
\begin{equation}
    \frac{2\left|f\right|^2}{\tilde{\sigma}^2}\left(1-\beta\frac{\left|f\right|^2\sigma^2}{\left|g\right|^2\tilde{\sigma}^2}\right)>0 \; .
\end{equation}
After some manipulations of this expression, we obtain condition \eqref{eq:ImpropCond}. Now assume that condition \eqref{eq:ImpropCond} holds (i.e., improper signaling is beneficial) and let $Q$ be the power budget. Since $R_{SU}(\kappa)$ increases with $\kappa$, the optimal value of $\kappa$ can be set to 1 if the resulting transmit power is below the power budget, thus obtaining the first case in \eqref{eq:kappaOpt}. Otherwise, maximum power transmission is allowed by selecting $\kappa$ accordingly, i.e, $q(\kappa)=Q$, which yields the second case. This concludes the proof.

% -------------------------------------------------------------------------
\bibliographystyle{IEEEtran}
\bibliography{Letter2014}

\end{document}